\DeclareSymbolFontAlphabet{\mathbb}{AMSb}
\def\verbatim@font{\scriptsize\ttfamily}
\newcommand{\button}[1]{\raisebox{-0.1\baselineskip}
{\includegraphics[height=0.9\baselineskip]{#1.png}}}
\newenvironment{xverbatim}{\quote\verbatim}{\endverbatim\endquote}
\newcommand{\ldbrack}{\Lbrack\,}
\newcommand{\rdbrack}{\,\Rbrack}
\newcommand{\semantics}[1]{{\ensuremath\ldbrack{#1}\rdbrack}}
\newtheorem{theorem}{Theorem}
\newtheorem{definition}{Definition}
\title{Teaching the Formalization of Mathematical Theories and Algorithms 
via the Automatic Checking of Finite Models\thanks{Supported by
the Johannes Kepler University Linz, Linz Institute of Technology (LIT), 
Project LOGTECHEDU \enquote{Logic Technology for Computer Science Education}.}}
\author{Wolfgang Schreiner \qquad Alexander Brunhuemer \qquad Christoph Fürst
\institute{Research Institute for Symbolic Computation (RISC) \&
Linz Institute of Technology (LIT) \\
Johannes Kepler University, Linz, Austria}
\email{Wolfgang.Schreiner@risc.jku.at
\quad alex.brunhuemer@gmail.com
\quad Christoph.Fuerst@risc.jku.at}}
\begin{document}
\maketitle

\begin{abstract}

Education in the practical applications of logic and proving such as the formal
specification and verification of computer programs is substantially hampered by
the fact that most time and effort that is invested in proving is actually
wasted in vain: because of errors in the specifications respectively algorithms
that students have developed, their proof attempts are often pointless (because
the proposition proved is actually not of interest) or a priori doomed to fail
(because the proposition to be proved does actually not hold); this is a
frequent source of frustration and gives formal methods a bad reputation. RISCAL
(RISC Algorithm Language) is a formal specification language and associated
software system that attempts to overcome this problem by making logic
formalization fun rather than a burden. To this end, RISCAL allows students to
easily validate the correctness of instances of propositions respectively
algorithms by automatically evaluating/executing and checking them on (small)
finite models. Thus many/most errors can be quickly detected and subsequent
proof attempts can be focused on propositions that are more/most likely to be
both meaningful and true.

\end{abstract}

\section{Introduction}
\label{sect:introduction}

From a student's perspective, education in the practical applications of logic
and proving (such as the formal specification of a computational problem, the
development of an algorithm that is supposed to solve this problem, and the
formal verification that the algorithm indeed implements the specification) is
often a source of frustration: on the one hand, the specification that she
writes may be too weak (sometimes even trivially satisfied) such that a
successful verification may be of little value (sometimes even completely
pointless); on the other hand, she may waste a lot of time in proof attempts
that are a priori doomed to fail due to a variety of possible
\enquote{show-stoppers}: her specification may be too strong (sometimes even
not implementable), her algorithm may not implement the specification, and the
additional information that she has to provide for the derivation of
verification conditions (in particular annotations of loops by invariants) may
be not adequate (invariants may be too strong or too weak). To the student it
would thus be very re-assuring, if formal definitions, specifications,
algorithms, and annotations could be (quickly) validated to find out apparent errors
\emph{before} starting any (costly) proof attempts.

Of course, to establish the truth of a proposition interpreted over an infinite
domain generally requires a symbolic proof (which cannot be fully automated,
thus most program verification environments make use of interactive proof
assistants rather than just relying on automated provers); propositions over
finite domains, however, can be automatically checked without proof by
systematically enumerating all possible values for the quantified variables of a
formula. The problem then, however, is that propositions over finite domains are
not necessarily semantically connected to corresponding formulas over infinite
domains; for instance, the formula $(\forall x.\ \exists y.\ y > x)$ is true when
interpreted over the infinite set $\mathbb{N}$ of natural numbers but false when
interpreted over every finite subset of $\mathbb{N}$.

To overcome this problem, we may restrict all variables to finite types whose
sizes are bounded by some parameter $n\in\mathbb{N}$ (or multiple such
parameters); the specification is thus interpreted over a domain that is finite
but of arbitrary size. It may then involve a formula $(\forall
x\in\mathbb{N}[n].\ F)$ where $\mathbb{N}[n]$ denotes all natural numbers less
than equal $n$. If we instantiate $n$ with, say, $5$, we get a formula instance
$(\forall x\in\mathbb{N}[5].\ F)$ that can be effectively evaluated; thus all
specifications and annotations of programs can be effectively checked during the
execution of the program (\emph{runtime assertion checking}). Furthermore, since
also the domains of program variables are correspondingly restricted, we can
effectively execute a program and check its annotations for all possible inputs
(\emph{model checking}); only if we do not find errors, the verification of the
general specification may be attempted, e.g. the proof of $(\forall
n\in\mathbb{N}.\ \forall x\in\mathbb{N}[n].\ F)$.

Based on this idea and prior expertise with computer-supported program
verification especially in educational scenarios~\cite{Schreiner2008,Schreiner2012},
we have developed the specification language RISCAL (RISC Algorithm
Language) and associated software system~\cite{RISCAL,Schreiner2017}. 
RISCAL has been designed in such a way that
\begin{itemize}
\item every type has (arbitrarily many but) only finitely many values, and thus
\item every language construct is executable, and thus
\item every constant, function, predicate, theorem, procedure can be evaluated.
\end{itemize}
While every RISCAL type such as $\mathbb{N}[n]$ is finite, it may depend on a
constant $n$ not defined in the specification; thus the specification denotes an
infinite class of models of which every instance (corresponding to every
concrete values of $n$) is finite and executable. With RISCAL we may thus
validate some model instances before attempting to prove the correctness of all
models (for arbitrary values of $n$).

RISCAL is intended to model, rather than low-level program code, high-level
algorithms as can be found in textbooks on discrete mathematics. It thus
provides a richer collection of built-in data types (e.g., sets and maps) and
operations (e.g., quantified formulas as program conditions and implicit
definitions of values by formulas) than can be typically found in real
programming languages; in particular, RISCAL also supports various
non-deterministic phrases based on the choice of a value with a determining
property. This enables the implicit definition of functions respectively of
non-deterministic algorithms that have not necessarily a uniquely defined
result. The current version of the RISCAL software supports model checking of
formulas, specifications, and algorithms via the runtime assertion checking of
all possible executions, based on the executability of all specifications and
annotations (further automatic mechanisms based on SMT solving and interactive
proofs with the help of proving assistants will be added in the future). The
implementation allows to lazily evaluate/execute all possible
evaluation/execution paths in non-deterministic expressions/statements and have
theorems and algorithms checked in each path.

RISCAL is related to a large body of prior research; we only give a short
account of the work that seems most relevant, mainly focusing on approaches that
allow students to validate formulas, specifications, algorithms not only by
symbolic proving. For instance, the classic software system Tarski's
World~\cite{Barker2008} has applied a visual approach: it demonstrates the
semantics of first-order logic through games in which three-dimensional worlds
are populated with various geometric figures that test the truth of logic
formulas; however, this framework is oriented towards beginners and has limited
expressiveness.

Various languages of automated reasoning systems have some executable flavor,
which allows to evaluate formal definitions, e.g., the formal proof management
system Coq~\cite{Bertot2004}, the generic proof assistant
Isabelle~\cite{Nipkow2017} (which has been, e.g., used to define the formal
semantics of a simple imperative programming language in executable form~\cite{Nipkow2014}),
or the system Theorema~\cite{Buchberger2016} for computer supported mathematical
theorem proving (which considers computing as a special form of proving). As for
algorithm languages, Alloy~\cite{Jackson2011} is a language for describing
structures and their relationships based on a relational logic; the Alloy
Analyzer is a satisfiability solver that finds structures satisfying certain
constraints. The formal method Event-B and the supporting Rodin
system~\cite{Abrial2010} have been developed for the modeling and analysis of
systems, based on set theory as a modeling notation and the concept of
refinement to represent systems at different abstraction levels. The
specification language TLA+ for describing concurrent systems~\cite{Lamport2002}
and the corresponding algorithm language PlusCal are supported by the TLC model
checker. Also the Vienna Development Method VDM~\cite{Larsen2016} provides an
expressive language for modeling algorithms and supports by its software
Overture the testing of specifications. As for real programming languages with
support for formal specification, the programming language WhyML of the
verification environment Why3~\cite{Filliatre2013} and Microsoft's programming
language Dafny~\cite{Leino2010} provide a rich variety of built-in specification
constructs; however, both WhyML and Dafny do not support model checking. Also
for industrially supported languages, in particular around the Java Modeling
Language (JML) for the formal specification of Java programs, an ecosystem of
supporting tools has been developed~\cite{Burdy2005}, including runtime
assertion checkers, extended static checkers, and full-fledged verifiers.

The thesis~\cite{Ritirc2016} has evaluated various of the languages and tools
mentioned above with respect to their suitability for specifying and verifying
mathematical algorithms; ultimately it favored, with some reservations, PlusCal
and VDM. PlusCal is attractive because of its roots in first-order logic and set
theory. However, by its heritage from TLA+, PlusCal has no static type system,
no possibility to directly specify algorithm contracts and annotations, and it
does not support recursion; also models are not necessarily restricted to finite
size such that model checks may not terminate. VDM/Overture mainly aims at the
development and analysis of models of complex systems and the generation of
executable code from specifications, but it is generally also suitable for the
purpose addressed by RISCAL. However, because its type system is based on
infinite types, it does not really support exhaustive model checking but only
(combinatorial) testing where the developer has to explicitly specify sets of
executions to be performed and annotations be checked; also there is no direct
support for the formulation/checking of mathematical theorems. Furthermore,
execution is deterministic such that e.g. in non-deterministic choices always
the first values (or optionally a random value) is chosen. Quantified constructs
are executable only if the bound variables take values from user-defined finite
collections (sets or sequences), and it is not possible to directly annotate
loops with invariants (only global system/type invariants are supported).
 
The development of RISCAL has been triggered and shaped by above findings.
Compared with the approaches mentioned, only RISCAL supports a rich language for
conveniently defining mathematical theories and specifying, modeling, and
annotating (potentially non-deterministic) algorithms that is both fully
checkable and semantically linked to arbitrarily sized models. Furthermore, we
are convinced that the RISCAL software is by far the most streamlined and
easiest to use for our purposes. So far, RISCAL has been used to formalize
algorithms in number theory~\cite{Fuerst2017}, discrete
mathematics~\cite{Brunhuemer2017}, computer algebra, and logic. It is currently
applied in a course on \enquote{Formal Methods in Software Development} for
master programmes in computer science and mathematics. The ultimate goal is to
build up a comprehensive library and accompanying lecture materials;
furthermore, due to the immediate feedback of the system about the correctness
of theorems and algorithms, we envision RISCAL as an ideal vehicle for
self-directed and self-paced learning in STEM education. Students have already
given very positive feedback how natural the use of the system feels and how
easy it is to play with formal specifications.

The remainder of this paper is structured as follows: Section~\ref{sect:RISCAL}
gives an overview on the RISCAL specification language and associated software
system. Section~\ref{sect:education} outlines the envisioned strategy of
applying RISCAL in education on the formalization of theories and algorithms.
Section~\ref{sect:examples} describes first attempts on the development of
formal specification libraries in various mathematical domains.
Section~\ref{sect:conclusions} concludes and discusses our plans on the
further development and application of RISCAL.

\section{The RISCAL Language and System}
\label{sect:RISCAL}

\begin{figure}
\centering
\includegraphics[width=0.72\textwidth]{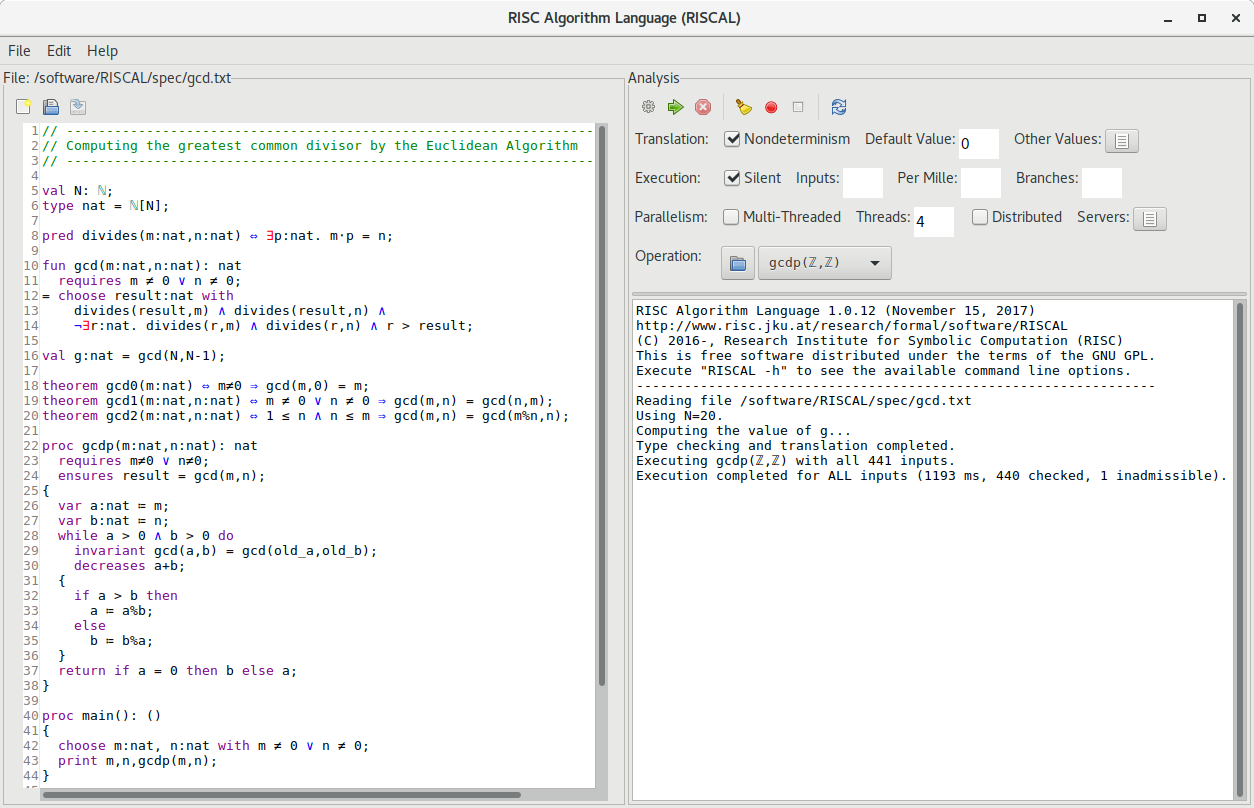}
\caption{The RISCAL System}
\label{fig:RISCAL}
\end{figure}

In this section, we give a short account of the RISCAL language and
software system; for more details, see the tutorial
and reference manual~\cite{Schreiner2017}.

\paragraph{System}

The user interface of the RISCAL software system is depicted in
Figure~\ref{fig:RISCAL}; it contains an editor frame for RISCAL specifications
on the left and the control widgets and output frame of the checker on the
right. The RISCAL specification language is based on a statically typed variant
of first order predicate logic. On the one hand, it allows to develop
mathematical theories such as that of the greatest common divisor depicted at
the top of Figure~\ref{fig:gcd}; on the other hand, it also enables the
specification of algorithms such as Euclid's algorithm depicted in the same
figure below (theory and specification will be discussed later). The lexical
syntax of the language includes Unicode characters for common mathematical
symbols; these may be entered in the RISCAL editor via ASCII shortcuts; e.g.,
the character $∀$ is entered by first typing the text \texttt{forall} and then
pressing the keys \texttt{<Ctrl>} and \texttt{\#} simultaneously.

\begin{figure}[t]
\begin{xverbatim}
val N: ℕ; type nat = ℕ[N];

pred divides(m:nat,n:nat) ⇔ ∃p:nat. m⋅p = n;
fun gcd(m:nat,n:nat): nat
  requires m ≠ 0 ∨ n ≠ 0;
= choose result:nat with
    divides(result,m) ∧ divides(result,n) ∧
    ¬∃r:nat. divides(r,m) ∧ divides(r,n) ∧ r > result;

theorem gcd0(m:nat) ⇔ m≠0 ⇒ gcd(m,0) = m;
theorem gcd1(m:nat,n:nat) ⇔ m ≠ 0 ∨ n ≠ 0 ⇒ gcd(m,n) = gcd(n,m);
theorem gcd2(m:nat,n:nat) ⇔ 1 ≤ n ∧ n ≤ m ⇒ gcd(m,n) = gcd(m

proc gcdp(m:nat,n:nat): nat
  requires m≠0 ∨ n≠0;
  ensures result = gcd(m,n);
{
  var a:nat ≔ m; var b:nat ≔ n;
  while a > 0 ∧ b > 0 do
    invariant gcd(a,b) = gcd(old_a,old_b);
    decreases a+b;
  {
    if a > b then a ≔ a
  }
  return if a = 0 then b else a;
}
\end{xverbatim}
\caption{Euclid's Algorithm in RISCAL}
\label{fig:gcd}
\end{figure}

\paragraph{Language}

RISCAL specifications consist of declarations of 
the following kinds of entities:

\begin{description}

\item[Types] \texttt{type $I$ = $T$} introduces a named type $I$ defined by the
type expression $T$; types include booleans, integers, sets, tuples, records,
arrays, and maps (partial functions). All types are \emph{finite}, e.g., for
integer constants $N$ and $M$ with $N\leq M$ the type \texttt{ℤ[$N$,$M$]}
denotes the type of all integers in the interval $[N,M]$ while
\texttt{Array[$N$,$T$]} denotes the type of all arrays of length $N \geq 0$ with
elements of type $T$. Recursive (algebraic) data types (whose values are terms
of a finite depth $N \geq 0$) may be introduced by a declaration
\texttt{rectype($N$) $T$ = $c$($T_1$,\ldots,$T_n$) | \ldots} .

\item[Constants] \texttt{val $I$:$ℕ$} introduces an unspecified natural number
constant $I$ while \texttt{val $I$:$T$ = $E$} introduces a constant $I$ of type
$T$ which is explicitly defined by a term $E$. Terms can be composed from a rich
variety of built-in functions and quantifiers, e.g. the term
\texttt{(∑x:ℕ[N]~with~x\%2≠0.~x⋅x)} denotes the sum of the squares of all odd
natural numbers less than equal $N$.

\item[Functions and Predicates] \texttt{fun $I$(\ldots):$T$ = $E$} introduces a
function $I$ with result type $T$; the result value is defined by expression $E$
of type $T$; correspondingly, \texttt{pred $I$(\ldots) ⇔ $F$} defines a
predicate $I$, i.e., a Boolean-valued function whose result is defined by
formula $F$ (an expression of type \texttt{Bool}). Formulas can be written in a
notation that is close to typical mathematical practice, e.g.,
\texttt{(∀x:ℕ[N],y:ℕ[N].~x ≤ y ⇒ ∃z:Nat.~z ≤ y ∧ x+z = y)} is such a formula.


\item[Theorems] \texttt{theorem $I$ ⇔ $F$} introduces a Boolean constant $I$
whose value is defined by a formula $F$; this declaration asserts that the value
of $I$ is \texttt{true}, i.e., that $F$ is valid. Likewise \texttt{theorem
$I$(\ldots) ⇔ $F$} introduces a predicate $I$ defined by formula $F$; this
declaration asserts that $F$ is valid for all
possible parameter values, i.e., $F$ is implicitly universally quantified.

\item[Procedures] \texttt{proc $I$(\ldots):$T$ \{ $C$; return $E$; \}}
introduces a procedure $I$ with result type $T$. A procedure is a function whose
result value is determined by the execution of command $C$; this establishes a
context (determined by the values of modifiable variables in the procedure) in
which the value of expression $E$ is evaluated to denote the result value.
Commands support the usual algorithmic constructs like variable assignments,
command sequences, and various forms of conditionals and loops. Loops may be
annotated by \texttt{invariant $F$} to indicate that formula $F$ is true before
and after every iteration of the loop; the annotation \texttt{decreases $E$}
indicates that the value of the termination measure $E$ (a natural number
expression) is decreased in every iteration. A command \texttt{assert $F$}
indicates that formula $F$ is \texttt{true} when the command is executed.

\end{description}

Parameterized entities may be annotated by preconditions of form
\texttt{requires $F$} which states that only those parameter values are legal
that satisfy formula $F$; correspondingly annotation \texttt{ensures $F$} states
that only a result (denoted by the keyword \texttt{result}) is legal that
satisfies $F$. These entities may be also defined recursively; by an annotation
\texttt{decreases $E$} a termination measure $E$ is stated, i.e., an expression
$E$ that evaluates to a natural number which is decreased in every recursive
invocation.

The algorithmic language RISCAL also supports \emph{non-deterministic}
expression evaluations respectively command executions, which may considerably
simplify the formulation of many algorithms. For instance, the term
\texttt{(choose $I$:$T$~with $F$)} denotes some value~$I$ of type $T$ that
satisfies formula $F$; if no such value exists, the value of the term is
undefined. A corresponding command introduces a constant~$I$ with that property
into the current context. The conditional command \texttt{choose \ldots\ then
$C_1$ else $C_2$} executes command~$C_1$, if such a constant exists, and $C_2$,
otherwise; the loop \texttt{choose \ldots\ do $C$} performs the choice
repeatedly and terminates when no more choice is possible. The loop \texttt{for
\ldots\ do $C$} executes the body $C$ for all possible choices in an unspecified
order.

\paragraph{Example}

The specification listed in Figure~\ref{fig:gcd} introduces the mathematical
theory of the greatest common divisor and its computation by the Euclidean
algorithm; this theory is restricted to the domain of all natural numbers less
than equal \texttt{$N$}:

\begin{itemize}

\item The theory first introduces the undefined constant $N$ which is
then used as the domain bound of the type \texttt{ℕ[$N$]}
subsequently called \texttt{nat}.

\item It then defines a predicate \texttt{divides($m$,$n$)} which denotes $m|n$
($m$ divides $n$) and subsequently a function \texttt{gcd(m,n)} which denotes
the greatest common divisor of $m$ and $n$. This function is introduced by an
implicit definition: for any $m,n$ with $m\neq 0$ or $n \neq 0$, its result is
the largest value $\mathit{result}$ that divides both $m$ and $n$.

\item The theorems \texttt{gcd0($m$)}, \texttt{gcd1($m$,$n$)}, and
\texttt{gcd2($m$,$n$)} describe the essential mathematical propositions on
which the correctness proof of the Euclidean algorithm is based.

\item The procedure \texttt{gcdp($m$,$n$)} embeds an iterative implementation of
the Euclidean algorithm. Its contract specified by the clauses \texttt{requires}
and \texttt{ensures} states that the procedure behaves exactly as the implicitly
defined function; the loop annotations \texttt{invariant} and \texttt{decreases}
describe essential knowledge for proving the total correctness of the procedure
(here the automatically introduced constants \textit{old\_a} and \textit{old\_b}
denote the values of the program variables $a$ and $b$ before entering the loop,
i.e., in this context, $m$ and $n$, respectively).

\end{itemize}

\paragraph{Evaluation, Execution, and Checking}

Whenever the user saves a specification, it is automatically syntactically and
semantically processed, i.e., parsed, type-checked, and translated into an
executable internal representation (see below for more details on the
implementation). Errors are reported by graphical markers in the editor frame
and by textual messages in the output frame.

For the semantic processing, all globally defined constants are immediately
evaluated (by evaluating the defining terms/formulas which may only refer to
already previously processed entities). Those natural number constants whose
values have not been defined in the specification receive their values from the
current system settings that the user may control in the graphical interface: by
pressing the button \enquote{Other Values} a menu pops up that allows to give
values to selected constants; if a constant is not given a value here, the
\enquote{Default Value} from the input box in the main window is chosen.
Since thus all constants receive specific values, all types depending on
these constants receive an interpretation as specific finite sets of values
(which are internally implemented as lazily evaluated sequences).

The translation proceeds according to either a deterministic or a
non-deterministic model of expression evaluation respectively command execution:

\begin{itemize} 

\item In the deterministic model, every non-deterministic choice results in only
\emph{one} value; if no value can be chosen (because of an unsatisfiable side condition
$F$ specified in the choice), the program aborts with an error message.

\item In the non-deterministic model, every non-deterministic choice
(ultimately) results in \emph{all} possible values; since all types are finite,
also the number of choices is finite, such that all expressions can be
evaluated in a finite amount of time.

\end{itemize}

The non-deterministic model is implemented by translating every expression that
may have non-deterministic semantics to a lazily evaluated sequence of values;
the evaluation of an expression respectively execution of a procedure first
proceeds according to whatever value is first delivered by all streams; after
that execution, it \enquote{backtracks} to the last stream and then proceeds
with the value that is delivered next; if this stream has delivered all values,
execution backtracks to the previous stream, and so on. Thus ultimately the
complete \enquote{tree} of all possible choices is processed in a depth-first
fashion. However, since in general the non-deterministic model requires the
processing of exponentially many evaluation/execution paths, the deterministic
mode is the default; the non-deterministic mode is only applied, if the user
has explicitly checked the selection box \enquote{Nondeterminism}.

For all parameterized entities (functions, predicates, theorems, procedures) the
menu \enquote{Operation} allows to select the entity; by pressing the
\enquote{Run} button \button{go-next}, the system generates (in a lazy fashion)
all possible combinations of parameter values that satisfy the precondition of
the operation, invokes the operation on each of these, and prints the
corresponding result values. If the selection box \enquote{Silent} is checked,
the output for each operation is suppressed; however, each execution still
checks the correctness of all annotations (preconditions, postconditions,
theorems, invariants, termination measures, and assertions). If the checking
thus completes without errors, we have validated that the operation satisfies
the specification for the domains determined by the current choices of the
domain bounds.

\paragraph{Example}

We may check the specification listed in Figure~\ref{fig:gcd} in various ways;
for this, in the following the value $N=20$ is used.

First we execute \texttt{gcd} in nondeterministic mode to validate that
the specification of the function allows for every admissible input one and only
one output and that this output is indeed the expected one:
\begin{xverbatim}
Executing gcd(ℤ,ℤ) with all 441 inputs.
Ignoring inadmissible inputs...
Branch 0:1 of nondeterministic function gcd(1,0):
Result (0 ms): 1
Branch 1:1 of nondeterministic function gcd(1,0):
No more results (4 ms).
...
Branch 0:440 of nondeterministic function gcd(20,20):
Result (1 ms): 20
Branch 1:440 of nondeterministic function gcd(20,20):
No more results (7 ms).
Execution completed for ALL inputs (5187 ms, 440 checked, 1 inadmissible).
\end{xverbatim}
By checking the option \texttt{Silent} we see that that the execution is
actually pretty fast (unchecking the option \texttt{Nondeterministic} would
speedup it further by a factor of more than two):
\begin{xverbatim}
Executing gcd(ℤ,ℤ) with all 441 inputs.
Execution completed for ALL inputs (273 ms, 440 checked, 1 inadmissible).
\end{xverbatim}
Likewise, checking the theorems proceeds very quickly, e.g.:
\begin{xverbatim}
Executing gcd2(ℤ,ℤ) with all 441 inputs.
Execution completed for ALL inputs (256 ms, 441 checked, 0 inadmissible).
\end{xverbatim}
Similarly, we can validate that the procedure satisfies its specification:
\begin{xverbatim}
Executing gcdp(ℤ,ℤ) with all 441 inputs.
Execution completed for ALL inputs (933 ms, 440 checked, 1 inadmissible).
\end{xverbatim}
This check indeed evaluates the procedure specification and the embedded loop
annotations; if we introduce an error, e.g. by modifying the last line of the
procedure to
\begin{xverbatim}
return if a = 0 then 0 else a;
\end{xverbatim}
the error is immediately detected:
\begin{xverbatim}
Executing gcdp(ℤ,ℤ) with all 441 inputs.
ERROR in execution of gcdp(0,1): evaluation of
  ensures result = gcd(m, n);
at line 24 in file gcd.txt:
  postcondition is violated by result 0
ERROR encountered in execution.
\end{xverbatim}
More uses of the RISCAL checker will be shown in the following sections.

\paragraph{Implementation}

RISCAL has been implemented in Java (using the Eclipse Standard Widget Toolkit
SWT for its graphical user interface). The executable internal representation of
a specification is essentially a Java version of a denotational semantics of the
specification, implemented on top of the lambda expressions introduced in Java
8.

For instance, in the deterministic model, the semantics of a command is a
function from contexts (variable bindings) to contexts. In the non-deterministic
model, it is a function from contexts to potentially infinite sequences of
contexts; such as sequence is modeled as a function that either returns
\enquote{null} denoting the end of the sequence or a pair of a value and another
sequence. This framework is expressed by the following mathematical domain definitions:
\begin{align*}
& \mathit{ComSem} := \mathit{Single} + \mathit{Multiple} \\
& \mathit{Single} := \mathit{Command} \to
(\mathit{Context}\to\mathit{Context}) \\
& \mathit{Multiple} := \mathit{Command} \to
(\mathit{Context}\to\mathit{Seq}(\mathit{Context})) \\
& \mathit{Seq}(T) := 
\mathit{Unit}\to\left(\mathit{Null} +\mathit{Next}(T,\mathit{Seq}(T))\right)
\end{align*}
Here $\mathit{ComSem}$ represents the (deterministic or non-deterministic)
semantics of commands and $\mathit{Seq}(T)$ represents the domain of sequences
of type $T$. The deterministic semantics of a one-sided conditional
command can be then defined by the following value of type $\mathit{Single}$:
\begin{align*}
\semantics{\texttt{if}\ E\ \texttt{then}\ C} :=
\lambda c. \textrm{if}\ \semantics{E}(c)\ \textrm{then}\ \semantics{C}(c)\ \textrm{else}\ c
\end{align*}
Corresponding Java 8 versions of these types are the following:
\begin{xverbatim}
public static interface ComSem  {
  public interface Single extends ComSem, Function<Context,Context> { }
  public interface Multiple extends ComSem, Function<Context,Seq<Context>> { }
}
public interface Seq<T> extends Supplier<Seq.Next<T>> { 
  // public Seq.Next<T> get();
  public final static class Next<T> {
    public final T head; public final Seq<T> tail;
    ...
  }
}
\end{xverbatim}
Here the deterministic semantics of the conditional command can be
defined by the following function:
\begin{xverbatim}
static ComSem.Single ifThenElse(BoolExpSem.Single E, ComSem.Single C) 
{ return (Context c) -> E.apply(c) ? C.apply(c) : c; }
\end{xverbatim}
In a similar style also the non-deterministic semantics can be defined
and implemented based on higher-order functions for combining functions
on sequences of values. 

The model checker applies the executable representation of an operation to all
values of the domain of the operation; this is based on a translation of types
to (again lazily evaluated) sequences of values such that from the interface of
a function the sequence of all possible arguments can be generated.

For speeding up the checking of larger models, both a multi-threaded and a
distributed version of the checker have been implemented. The multi-threaded
version can be selected by the check box \enquote{Multi-Threaded} where by the
input box \enquote{Threads} the number of worker threads can be selected. These
threads run on the local computer and iteratively request from the main thread
new inputs to which the chosen operation is to be applied; thus the domain of
inputs is processed by all threads in parallel. Additionally or alternatively
the distributed version can be selected by the check box \enquote{Distributed}
where by the button \enquote{Servers} connections to one or more remote
servers can be established. On each server an instance of RISCAL is started as a
separate process to which the main process forwards the specification for a
local translation into the executable form; each server process then repeatedly
queries from the main process a range of inputs which can be processed by
multiple threads per server (in addition to the threads running on the main
process). For more details, consult the reference manual~\cite{Schreiner2017}.

\paragraph{Proof-Based Verification}

Since RISCAL is based on checking rather than deriving and proving verification
conditions, there is not any direct connection of RISCAL to a verification
calculus like Hoare's axiomatic system or Dijkstra's predicate transformers.
However, as will be sketched in the following section, we plan as future work to
integrate also such a calculus as a way of validating by checking that loop
invariants are strong enough to carry through proof-based verification over
infinite domains.

\section{Using RISCAL in Education}
\label{sect:education}

RISCAL shall support the education in formal logic with emphasis on the formal
specification and verification of programs respectively algorithms (abstract
programs). The particular goal of RISCAL is to give students \emph{immediate
feedback} about the interpretation and adequacy of formal definitions and
specifications \emph{before} they attempt to formally prove theorems such as
verification conditions for the correctness of programs. The environment shall
thus encourage and support \emph{self-paced instruction} where students
\enquote{play} with multiple variations of definitions, specifications, and
annotations, and by the feedback of the system learn to interpret their meaning,
investigate their properties, and judge their appropriateness for the intended
purpose. The goal is to rule out subtle errors in definitions, theorems,
specifications, and annotations which may make subsequent proofs pointless
(since the theorem proved does not capture the informal intention) and/or
impossible (since the theorem does actually not hold); thus the major sources of
frustration in dealing with formal methods can be avoided or at least minimized.
Only when by these activities the adequacy of the formal artifacts has been
satisfactorily validated, formal proofs shall be attempted, which by the
previous activities have a high (at least much higher) chance of being both
meaningful and successful.

In particular, RISCAL can support the following activities:

\begin{enumerate}

\item \emph{The formalization of theories:} this involves the definition
of types, constants, functions, and predicates and the formulation of theorems
that claim certain properties of these theories. Subsequently it has to be
validated that these definitions indeed capture the intentions of the human
respectively that the claimed properties are indeed correct.

RISCAL can support this process by evaluating the definitions of functions and
predicates for all (or also just some) possible input values and observing their
outputs. Moreover, RISCAL may evaluate the theorems for all possible input
values (i.e., values for the universally quantified variables of the theorem)
and check their correctness; if a theorem is violated, the system reports
witnesses of the violation (i.e., values for the variables that make the
defining formula false). Additionally, the user may annotate every expression
$E$ as \texttt{print $E$} such that its evaluation prints the result value as a
side-effect; thus the evaluation of terms and formulas can be traced.

\item \emph{The specification of algorithms:} this involves the definition of
pre- and post-conditions of envisioned algorithms. Subsequently it has to be
validated that these specifications satisfy certain criteria and indeed describe
the intended input/output behavior of the algorithm. In particular, RISCAL may
validate precondition $P(x)$ on input $x$ and postcondition $Q(x,y)$ on input~$x$
and output~$y$ by the following activities (for simplicity, we write the
following formulas in common mathematical notation, concrete RISCAL
counterparts will be subsequently shown):

\begin{enumerate}

\item Check the validity of $\exists x. P(x)$ which verifies that the
precondition is satisfiable. If this theorem does not hold, the specification is
trivial: a proof of correctness of an algorithm with respect to the
specification typically quickly succeeds but is pointless (indeed small logical
errors may lead to the definitions of preconditions that are equivalent to
\enquote{false}).

\item Check the validity of $\forall x. P(x) \Rightarrow \exists y. \neg Q(x,y)$
which verifies that the postcondition is not valid, i.e., not satisfied by every
output (if some inputs actually allow arbitrary outputs, alternatively the
weaker theorem $\exists x,y. P(x) \wedge \neg Q(x,y)$ may be checked). If none
of these theorems hold, the specification is trivial: a proof of correctness of
an algorithm with respect to the specification typically quickly succeeds but is
pointless (indeed small logical errors may lead to the definitions of
postconditions that are equivalent to \enquote{true}).

\item Check the validity of $\forall x. P(x) \Rightarrow \exists y. Q(x,y)$.
This verifies that the specification is indeed satisfiable, i.e.,
that for every input that satisfies the precondition there exists some
output that satisfies the postcondition. If this theorem does not hold,
every attempt to prove the correctness of an algorithm with respect to
the specification is a priori doomed to fail.

\item Optionally, check the validity of $\forall x,y_1,y_2. P(x) \wedge Q(x,y_1)
\wedge Q(x,y_2) \Rightarrow y_1=y_2$. Thus we verify that the specification
defines the output uniquely; this, however, needs not generally be the case for
all specifications (algorithms are often intentionally underspecified).

\item Evaluate for all inputs the function
$f(x)\ \texttt{requires}\ P(x):=\texttt{choose}\ y: Q(x,y)$ which implicitly
defines its result by the postcondition (if the postcondition defines the result
uniquely, the evaluation may be performed in deterministic mode). By
inspecting the function results, we may validate that the specification
indeed describes the expected input/output behavior.

\end{enumerate}

Currently, the various theorems and the implicitly defined function have to be 
formulated manually by the user; their automatic generation by RISCAL is
planned in the near future.

\item \emph{The verification of algorithms:} this involves the definition of
procedures, their annotation with specifications, and (optionally) the
annotation of loops in the procedure bodies with invariants and termination
measures. The correctness of a procedure and the adequacy of its annotations
can be checked in RISCAL as follows:

\begin{enumerate}

\item Execute the procedure for all possible inputs. This verifies that for all
inputs that satisfy the precondition the procedure result indeed satisfies the
postcondition. If loop invariants and termination measures are given, this
also verifies that the invariants are not too strong (i.e., they hold before and
after every loop iteration) and that the termination measures are adequate (they
are decreased by every loop iteration and do not become negative). Thus, if a
termination measure is given, this guarantees the loop to terminate.

\item Derive from the procedure specification and the loop annotations
verification conditions that ensure the (partial or total) correctness of the
program with respect to the specification and check these. This in particular
verifies that the invariants are \enquote{inductive}, i.e, not too weak: from
the fact that the invariant holds before a loop iteration and the fact that the
loop condition holds, it can be concluded that the invariant also holds after
the iteration; it also verifies that from the loop invariant the postcondition
of the procedure can be concluded.

\end{enumerate}

Currently, the derivation of verification conditions has to be manually
performed by the application of Hoare calculus respectively Dijkstra's predicate
transformer calculus; their automatic generation in RISCAL is planned in the
near future.

\end{enumerate}

We illustrate some of above activities, concretely the specification and
verification of algorithms, by the following problem: given an array $a$ of
$n>0$ integers, find the maximum $m$ of $a$. The corresponding RISCAL
specification is based on the following domains:
\begin{xverbatim}
val N:ℕ; val M:ℕ;
type index = ℤ[-N,N]; type elem  = ℤ[-M,M]; type array = Array[N,elem];
\end{xverbatim}
Rather than basing our specifications on a single integer type, we use two
constants $N$ and $M$ to bound the types \texttt{index} of array
indices/lengths and the type \texttt{elem} of array elements, respectively.
The type \texttt{array} encompasses all arrays of length $N$; however,
the following specification uses a variable $n \leq N$ to denote the 
portion of the array actually considered in the problem. In the following
checks, we will use $N:=3$ and $M:=2$; thus arrays up to length 3
with values in the interval $[-2,2]$ will be considered.

The problem specification is then captured by the following predicates
representing the precondition and the postcondition of the problem, respectively:
\begin{xverbatim}
pred Pre(a:array, n:index) ⇔ 
  0 < n ∧ ∀k:index. n ≤ k ∧ k < N ⇒ a[k] = 0;
pred Post(a:array, n:index, m:elem) ⇔
  (∃k:index. 0 ≤ k ∧ k < n ∧ m = a[k]) ∧
  (∀k:index. 0 ≤ k ∧ k < n ⇒ m ≥ a[k]);
\end{xverbatim}
Here the precondition, in addition to requiring $n>0$, states that 
from index $n$ on all array elements are zero; while not strictly required,
this subsequently reduces the model space and thus speeds up all checks.

The specification is then validated with the help of the following declarations
that relate to the activities (2a)--(2e) mentioned above:
\begin{xverbatim}
theorem preSat ⇔ ∃a:array, n:index. Pre(a, n);
theorem postNotValid(a:array, n:index) ⇔ 
  Pre(a,n) ⇒ ∃m:elem. ¬Post(a,n,m);
theorem postSat(a:array, n:index) ⇔
  Pre(a,n) ⇒ ∃m:elem. Post(a,n,m);
theorem resultUnique(a:array, n:index, m1:elem, m2:elem) ⇔
  Pre(a,n) ∧ Post(a,n,m1) ∧ Post(a,n,m2) ⇒ m1 = m2;
fun maxFun(a:array, n:index): elem
  requires Pre(a,n);
= choose m:elem with Post(a,n,m);
\end{xverbatim}
Theorem \texttt{preSat} states that the precondition is satisfiable; since it
represents a constant, its value is immediately computed and checked when the
specification is processed. Theorems \texttt{postNotValid} (the
postcondition is not generally valid), \texttt{postSat} (the postcondition is
satisfiable), and \texttt{resultUnique} (the postcondition determines the result
uniquely) are individually verified by corresponding calls of the checker
(in non-deterministic mode with silent execution):
\begin{xverbatim}
Executing postNotValid(Array[ℤ],ℤ) with all 875 inputs.
Execution completed for ALL inputs (171 ms, 875 checked, 0 inadmissible).
Executing postSat(Array[ℤ],ℤ) with all 875 inputs.
Execution completed for ALL inputs (191 ms, 875 checked, 0 inadmissible).
Executing resultUnique(Array[ℤ],ℤ,ℤ,ℤ) with all 21875 inputs.
13638 inputs (13638 checked, 0 inadmissible, 0 ignored)... 
Execution completed for ALL inputs (3199 ms, 21875 checked, 0 inadmissible).
\end{xverbatim}
We further validate the specification by checking \texttt{maxFun}, now with 
non-silent execution in deterministic mode (to reduce the amount of output):
\begin{xverbatim}
Executing maxFun(Array[ℤ],ℤ) with all 875 inputs.
Ignoring inadmissible inputs...
Run 560 of deterministic function maxFun([-2,0,0],1):
Result (0 ms): -2
Run 561 of deterministic function maxFun([-1,0,0],1):
Result (0 ms): -1
...
Run 698 of deterministic function maxFun([1,2,0],2):
Result (0 ms): 2
...
Run 874 of deterministic function maxFun([2,2,2],3):
Result (0 ms): 2
Execution completed for ALL inputs (1146 ms, 155 checked, 720 inadmissible).
Not all nondeterministic branches may have been considered.
\end{xverbatim}
Having convinced ourselves about the adequacy of the specification, we
turn to the usual algorithm that solves the specified problem:
\begin{xverbatim}
proc maxProc(a:array, n:index): elem
  requires Pre(a,n);
  ensures  Post(a,n,result);
{
  var m:elem ≔ a[0]; 
  for var i:index≔1; i < n; i≔i+1 do
    invariant Invariant(a,n,m,i);
    decreases Termination(a,n,m,i);
  {
    if a[i] > m then m ≔ a[i];
  }
  return m;
}
\end{xverbatim}
The loop is annotated with the help of a predicate \texttt{Invariant}
and a function \texttt{Termination} that denote the invariant and
the termination term, respectively:
\begin{xverbatim}
pred Invariant(a:array, n:index, m:elem, i:index) ⇔
  1 ≤ i ∧ i ≤ n ∧
  (∃k:index. 0 ≤ k ∧ k < i ∧ m = a[k]) ∧
  (∀k:index. 0 ≤ k ∧ k < i ⇒ m ≥ a[k]);
fun Termination(a:array, n:index, m:elem, i:index):index = n-i;
\end{xverbatim}
By checking the procedure (activity 3a), we verify its correctness with respect 
to the specification, and (partially) validate the adequacy of invariant and 
termination term:
\begin{xverbatim}
Executing maxProc(Array[ℤ],ℤ) with all 875 inputs.
Execution completed for ALL inputs (93 ms, 155 checked, 720 inadmissible).
\end{xverbatim}
For a full validation of the adequacy of the termination term, we derive
the usual verification conditions for proving the total correctness
of the algorithm:
\begin{xverbatim}
theorem VC1(a:array, n:index, m:elem, i:index) 
  requires Pre(a,n);
⇔ m = a[0] ∧ i = 1 ⇒ Invariant(a,n,m,i);
theorem VC2(a:array, n:index, m:elem, i:index) 
  requires Pre(a,n);
⇔ Invariant(a,n,m,i) ⇒ Termination(a,n,m,i) ≥ 0;
theorem VC3(a:array, n:index, m:elem, i:index) 
  requires Pre(a,n);
⇔ Invariant(a,n,m,i) ∧ i < n ∧ a[i] > m ⇒ Invariant(a,n,a[i],i+1) ∧ Termination(a,n,a[i],i+1)≥0;
theorem VC4(a:array, n:index, m:elem, i:index)
  requires Pre(a,n);
⇔ Invariant(a,n,m,i) ∧ i < n ∧ ¬(a[i] > m) ⇒ Invariant(a,n,m,i+1) ∧ Termination(a,n,m,i+1) ≥ 0;
theorem VC5(a:array, n:index, m:elem, i:index)
  requires Pre(a,n);
⇔ Invariant(a,n,m,i) ∧ ¬(i < n) ⇒ Post(a,n,m);
\end{xverbatim}
All of these conditions are now checked in silent mode:
\begin{xverbatim}
Executing VC1(Array[ℤ],ℤ,ℤ,ℤ) with all 30625 inputs.
18714 inputs (3255 checked, 15459 inadmissible, 0 ignored)... 
Execution completed for ALL inputs (3188 ms, 5425 checked, 25200 inadmissible).
...
Executing VC5(Array[ℤ],ℤ,ℤ,ℤ) with all 30625 inputs.
21638 inputs (3725 checked, 17913 inadmissible, 0 ignored)... 
Execution completed for ALL inputs (2838 ms, 5425 checked, 25200 inadmissible).
\end{xverbatim}
Thus the algorithm is correct and the invariant is adequate for arrays of
lengths up to $N=3$ with absolute element values up to $M=2$. In order to verify
the correctness of the algorithm for arbitrary $N$ and $M$ we may pass above
conditions to a system that supports real (automated or interactive) reasoning
such as the RISC ProofNavigator~\cite{Schreiner2008}. If it can be shown that
above verification conditions hold for arbitrary~$N$ and~$M$, the algorithm is
indeed generally correct.

\section{Sample Specifications}
\label{sect:examples}

Since the release of the first version of RISCAL in March 2017, we have started
to develop first prototypes of formally checked specifications. They are
intended to serve as the nucleus of a future comprehensive library and
accompanying lecture materials to be used in the class room and for
self-instructed learning in degree programmes for computer science and
mathematics. The specifications include areas such as array-based algorithms,
logic, number theory, discrete mathematics, and computer algebra. Sample
specifications from two of these domains are described in somewhat more detail
below.

\subsection{Number Theory}
\label{sect:number}

In \cite{Fuerst2017}, we describe the application of RISCAL to number-theoretic
algorithms that arise in, e.g., cryptography. In such algorithms, \emph{prime
numbers} play an important role; thus as our first example we pick the problem
of generating, for a given bound $n\in\mathbb{N}$, all prime numbers less than
equal $n$: formally, we wish to compute every $p\in\mathbb{N}$ with $p \leq n$
that satisfies the following predicate $\mathit{isPrime}(p)$:
\begin{align*}
\mathit{isPrime}(p):\Leftrightarrow p > 1 \wedge \forall n\in\mathbb{N}.\
n | p \Rightarrow n=1 \vee n=p
\end{align*}
Here the predicate $n|p$ (\enquote{$n$ divides $p$}) is defined as usual as $n|p
:\Leftrightarrow \exists m\in\mathbb{N}.\ n\cdot m = p$.

The corresponding declarations in RISCAL are now as follows:
\begin{xverbatim}
val N: ℕ; type nat = ℕ[N];
pred divides(n:nat,p:nat) ⇔ ∃m:nat. n⋅m = p;
pred isPrime(p:nat) ⇔ p > 1 ∧ ∀n:nat. divides(n,p) ⇒ n = 1 ∨ n = p;
\end{xverbatim}
Here we first introduce the type \texttt{nat} of all natural numbers up to
some maximum $N$ and then define the predicates \texttt{divides(m,n)} 
representing $m|n$ and \texttt{isPrime(p)} in the natural way.

The classic algorithm for the solution of this problem is the \enquote{Sieve
of Eratosthenes}. From the educational point of view, this algorithm has the
advantage that it is easily understandable even to high school students and,
furthermore, that it gives us the opportunity to present two variants of an
algorithm: a \enquote{mathematical} one that is described on a higher level of
abstraction (by operating on sets), and a \enquote{computational} one that is
more implementation oriented (by operating on arrays).

In both variants, the algorithm is based on the following fundamental knowledge.
\begin{theorem}[Least Proper Divisor]
Let $n$ be a natural number greater equal 2. Then the least proper divisor 
$m> 1$ of $n$ is a prime.
\end{theorem}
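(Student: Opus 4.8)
The plan is to prove this by contradiction: assume the least proper divisor $m > 1$ of $n$ is \emph{not} prime, and derive a smaller proper divisor, contradicting minimality.

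First I would fix the setup. Let $n \geq 2$ be a natural number, and let $m$ be the least divisor of $n$ satisfying $m > 1$. Such an $m$ exists because the set of divisors of $n$ that exceed $1$ is nonempty (it contains $n$ itself, since $n \geq 2$) and is a nonempty set of natural numbers, hence has a least element by well-ordering. My goal is to show $m$ is prime, i.e. that $m > 1$ (immediate from the definition of $m$) and that the only divisors of $m$ are $1$ and $m$.

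Next comes the core argument. Suppose for contradiction that $m$ is not prime. Then there exists some divisor $d$ of $m$ with $1 < d < m$. The key step is transitivity of divisibility: since $d \mid m$ and $m \mid n$, I would conclude $d \mid n$. Thus $d$ is a divisor of $n$ with $d > 1$, yet $d < m$, which contradicts the choice of $m$ as the \emph{least} such divisor. Therefore $m$ has no divisor strictly between $1$ and itself, so $m$ is prime.

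The step I expect to be the main obstacle — or rather, the one requiring the most care — is making the transitivity of divisibility and the well-ordering appeal fully rigorous in whatever underlying formalism is used. In the RISCAL setting the statement is over the bounded type $\mathbb{N}[n]$, so I would want to check that the witnessing quotients stay within the finite range and that ``least proper divisor'' is interpreted as the minimum over the finite, nonempty set of divisors exceeding $1$; the existence of a least element is then trivial for a finite set rather than requiring the full well-ordering principle. The divisibility manipulation itself is routine: if $d \cdot k = m$ and $m \cdot j = n$ then $d \cdot (k \cdot j) = n$, giving the needed witness.
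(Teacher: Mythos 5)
Your proof is correct and takes essentially the same approach as the paper's: a proof by contradiction in which a divisor $d$ of $m$ with $1 < d < m$ is shown, via transitivity of divisibility (the paper writes this as $n = m\cdot m' = o\cdot o'\cdot m'$), to be a smaller proper divisor of $n$, contradicting the minimality of $m$. Your additional remarks on well-ordering and the bounded RISCAL type go beyond what the paper's informal proof addresses, but the core argument is identical.
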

\begin{proof}
Let $m> 1$ be the least proper divisor of $n$. We assume that $m$ is not
a prime and show a contradiction. Since $m$ is a divisor of $n$, there exists
some $m'\in\mathbb{N}$ with $n=m\cdot m'$. Since $m$ is not a prime, there exists
some $o\in\mathbb{N}$ with $o\neq 1$ and $o\neq m$ such that $o|m$.
Therefore there also exists some $o'\in\mathbb{N}$ with $o'\neq 1$ and $o'\neq m$
and $m=o\cdot o'$. Therefore, we have $n=m\cdot m'=o\cdot o'\cdot m'$. But then
$o<m$ is also a divisor of $n$, i.e., $m$ is not the least divisor, which
contradicts our assumption.
\end{proof}

While this argument can be easily understood by students, we can give
the theorem further credibility by formulating it in RISCAL:
\begin{xverbatim}
theorem leastProperDivisor(n:nat,m:nat) ⇔
  n ≥ 2 ∧ m > 1 ∧ divides(m,n) ∧ (∀m0:nat. m0 > 1 ∧ divides(m0,n) ⇒ m ≤ m0)
  ⇒ isPrime(m);
\end{xverbatim}
Now checking the theorem with for example $N=30$ yields
\begin{xverbatim}
Executing leastProperDivisor(ℤ,ℤ) with all 961 inputs.
Execution completed for ALL inputs (95 ms, 961 checked, 0 inadmissible).
\end{xverbatim}
which validates its correctness even without proof.

This theorem forms the basis of the Sieve of Eratosthenes: starting with the
prime number candidates~$2,\ldots,n$, we repeatedly pick the smallest value
(which by above theorem is a prime) and remove it together with its multiples
(which by definition are not primes); this process is repeated until no more
candidate is left. In the following, we give two formulations of this algorithm.

\paragraph{Sieve of Eratosthenes (Set-Based)}

The following algorithm computes for given $n\in\mathbb{N}$ the set $P$
of all values $p\leq n$ with $\mathit{isPrime}(p,n)$:

\begin{algorithm}[H]
\begin{algorithmic}[1]
\Ensure{$P = \{ p\ |\ p\in\mathbb{N} \wedge p\leq n \wedge \mathit{isPrime}(p) \}$}
\State $P \gets \emptyset$
\State $C \gets \{2,\ldots,n\}$
\While{$C \neq \emptyset$}
\State $p \gets \min(C)$
\State $P\gets P \cup \{p\}$
\State $C\gets \{c\in C: p \not|\ c\}$
\EndWhile
\end{algorithmic}
        \caption[caption]{
        Compute the set $P$ of all primes less than equal $n\in\mathbb{N}$.}
        \label{alg:2}
\end{algorithm}

\noindent
This algorithm can be formalized in RISCAL (in the most elegant way) as follows:
\begin{xverbatim}
proc SieveOfEratosthenesSet(n:nat): Set[nat]
  ensures result = { p | p:nat with p ≤ n ∧ isPrime(p) };
{
  var P:Set[nat] ≔ ∅[nat];
  var C:Set[nat] ≔ 2..n;
  choose p∈C with ∀c∈C. p ≤ c do
  {
    P ≔ P ∪ {p};
    C ≔ { c | c∈C with ¬divides(p,c) };
  }
  return P;
}
\end{xverbatim}
The construct \texttt{choose \ldots\ do \ldots} repeatedly chooses the minimum
value $p$ from the set $C$ of candidates, puts it as the next prime number into
set $P$, and removes it together with its multiples from $C$; the process
terminates, when no more choice is possible (i.e., when $C$ is empty).

A general proof of correctness of the algorithm can be based on the
following loop annotations:
\begin{xverbatim}
invariant P = { p | p:nat with p ≤ n ∧ isPrime(p) ∧ ∀c∈C. p < c };
invariant C ⊆ 2..n;
decreases |C|;
\end{xverbatim}
Here the invariants demonstrate that $P$ contains, at every iteration of the
loop, all primes that are less than equal $n$ and less than equal the minimum of
$C$. Thus, when the loop terminates with $C=\emptyset$, $P$ holds all primes
less than equal $n$. Since the termination term $|C|$ indicates that the size
of~$C$ is decreased in every loop iteration, this is indeed eventually the case.
Checking the algorithm with $N=30$
\begin{xverbatim}
Executing SieveOfEratosthenesSet(ℤ) with all 31 inputs.
Execution completed for ALL inputs (746 ms, 31 checked, 0 inadmissible).
\end{xverbatim}
validates both that the algorithm satisfies its contract and that the
loop annotations are correct (if they should not be sufficient for a proof, 
they are at least not too strong).

\paragraph{Sieve of Eratosthenes (Array-Based)}

The following algorithm computes for given $n\in\mathbb{N}$ the Boolean
array~$P$ of length $n$ such that $P[p]$ has value \enquote{true} if and only if
the property $\mathit{isPrime}(p)$ holds:

\begin{algorithm}[H]
\begin{algorithmic}[1]
\Ensure{$\forall p\in\mathbb{N}.\ p\leq n \Rightarrow (P[p] = \textrm{T}
\Leftrightarrow\mathit{isPrime}(p))$}
\State $P \gets ({\rm T},{\rm T},\ldots,{\rm T}) \in \mathbb{B}^{n+1}$;
\State $P[0],P[1] \gets {\rm F}$;
\For{$p$\, \textbf{from} $2$ \textbf{while} $p\cdot p \leq n$ \textbf{by} $1$}
  \If{$P[p]={\rm T}$}
  \For{$k$ \textbf{from} 2 \textbf{while} $p\cdot k \leq n$ \textbf{by} $1$}
	  \State $P[p\cdot k] \gets {\rm F}$
  \EndFor
  \EndIf
\EndFor
\end{algorithmic}
        \caption[caption]{
        Compute the Boolean array $P$ which indicates all primes less than equal
        $n\in\mathbb{N}$.}
        \label{alg:3}
\end{algorithm}

This algorithm is in essence a refinement of the set-based algorithm, where
the Boolean array $P$ takes the role of both sets $P$ and $C$: all array values at
indices less than $p$ already indicate the prime status of the indices 
while all indices greater than equal $p$ are the candidates that remain to
be processed. The outer loop looks for the next smallest prime number $p$; when
such a $p$ is found, its greater multiples are removed from the candidates. 
The algorithm can be formalized in RISCAL as follows:

\begin{xverbatim}
proc SieveOfEratosthenesArray(n:nat): Array[N+1,Bool]
  ensures ∀p:nat with p ≤ n. result[p] ⇔ isPrime(p);
{
  var P:Array[N,Bool] ≔ Array[N+1,Bool](⊤);
  P[0] ≔ ⊥; P[1] ≔ ⊥;
  for var p:nat ≔ 2; p⋅p ≤ n; p ≔ p+1 do
  {
    if P[p] then
    {
      for var k:nat ≔ 2; p⋅k ≤ N; k ≔ k+1 do
        P[p⋅k] ≔ ⊥;
    }
  }
  return P;
}
\end{xverbatim}

A verification of the algorithm can be based on the
following annotations of the outer loop:
\begin{xverbatim}
invariant 2 ≤ p ∧ (p-1)⋅(p-1) ≤ n;
invariant ∀j:nat with j < p. P[j] ⇔ isPrime(j);
invariant ∀j:nat with 2 ≤ j ∧ j < p. ∀k:nat with j < k. divides(j,k) ⇒ ¬P[k];
decreases n-p+2;
\end{xverbatim}
The invariants essentially state that the prime status for all positions less than $p$
has been already determined and all positions from $p$ on do not hold multiples
of the smaller values. Correspondingly we have the 
following annotations of the inner loop:
\begin{xverbatim}
invariant 2 ≤ k ∧ (p-1)⋅k ≤ N;
invariant ∀j:nat with 2 ≤ j ∧ j < k. ¬P[p⋅j];
decreases N-k;
\end{xverbatim}
Here the invariants essentially state that all multiples $p\cdot j$ with
$j < k$ have already removed as prime number candidates.
Checking also this algorithm with $N=30$
\begin{xverbatim}
Executing SieveOfEratosthenesArray(ℤ) with all 31 inputs.
Execution completed for ALL inputs (171 ms, 31 checked, 0 inadmissible).
\end{xverbatim}
again validates both that the algorithm satisfies its contract and that the
loop annotations are correct (if they should not be sufficient for a proof, 
	they are at least not too strong).

It is generally a challenging task to come up with completely correct
formalizations that in particular also hold at the boundaries of iteration
respectively quantification ranges. From an educational point of view, failures
reported by the checker are a valuable way to learn to take extra care of
boundary conditions and special cases; these situations may be easily overlooked
by a human but are also easily handled by automatic checking.


\subsection{Discrete Mathematics}
\label{sect:discrete}

In \cite{Brunhuemer2017}, chosen theories from discrete mathematics have been
formalized in RISCAL. This includes the development of the mathematical
theories, the specification of algorithms and their annotation with
meta-information. The concepts have been validated on small finite domains which
should work as a ground layer for further verification on models of arbitrary
size. This approach should save much time by quickly finding errors in
considerations, because in most cases errors in the specifications and
annotations from the generalized concepts also appear in the finite domains.

A big focus in the elaborations lies on drawing the connections between
different ways of describing an algorithm/function, which leads to a deeper
understanding of the underlying theories. Most functions we have defined in
three variants: \emph{implicitly} by a condition that the result shall fulfill
(by a RISCAL term of form \texttt{choose I:T with P}, which denotes a value
\texttt{I} of type \texttt{T} with property \texttt{P}), \emph{explicitly} by a
constructive (generally recursive) description how to find the result (which
requires a termination measure to ensure the well-definedness of the
definition), and finally \emph{procedurally} by the execution of a sequence of
commands that update the values of variables.

We demonstrate these connections by the problem of computing the
\emph{transitive closure} of a
binary relation based on the following mathematical concepts.
\begin{definition}[Relation] $R$ is a
\emph{binary relation on $E$} if it is a set of pairs of elements of $E$, 
i.e., $R \subseteq E \times E$. If $E$ is fixed from the context, then
we call $R$ just a \emph{relation}.
\end{definition}
\noindent
We define correspondingly in RISCAL the type of binary relations on the set 
of numbers $\{0,\ldots,N\}$ for some maximum $N\in\mathbb{N}$:
\begin{xverbatim}
val N:ℕ;
type elem = ℕ[N];
type pair = Tuple[elem,elem];
type relation = Set[pair];
\end{xverbatim}
\begin{definition}[Transitivity] A binary relation $R$ on $E$ is
\textit{transitive} if, whenever $\langle x,y \rangle \in
R$ and $\langle y,z \rangle \in R$, then $\langle x,z \rangle
\in R$, for all $x,y,z \in E$.
\end{definition}
\noindent
In RISCAL we define this property by the following predicate:
\begin{xverbatim}
pred isTransitive(r:relation) ⇔ ∀x∈r,y∈r. (x.2 = y.1) ⇒ ⟨x.1,y.2⟩ ∈ r;
\end{xverbatim}
Based on these fundamental definitions we can now specify the
central problem of this section.

\paragraph{Problem Specification and Implicit Definition}

Our problem is to compute the transitive closure of a given binary relation
based on the following definition.
\begin{definition}[Transitive Closure] $S$ is the
\emph{transitive closure} of relation $R$, if $S$ is the smallest 
transitive relation $S$ that contains $R$, i.e., $S$ is a subset of every 
transitive relation that contains $R$.
\end{definition}
\noindent
The corresponding RISCAL definition is:
\begin{xverbatim}
pred isTransitiveClosure(r:relation,s:relation) ⇔ 
  r ⊆ s ∧ isTransitive(s) ∧ 
  (∀s0:relation. r ⊆ s0 ∧ isTransitive(s0) ⇒ s ⊆ s0);
\end{xverbatim}
We claim that the transitive closure of every relation indeed
exists and, furthermore, is uniquely defined:
\begin{xverbatim}
theorem transitiveClosureExists(r:relation) ⇔ 
  ∃s:relation. isTransitiveClosure(r,s);
theorem transitiveClosureIsUnique(r:relation) ⇔ 
  ∀s1:relation with isTransitiveClosure(r,s1).
  ∀s2:relation with isTransitiveClosure(r,s2).
    s1 = s2;
\end{xverbatim}
These properties can be quickly validated for $N=2$:
\begin{xverbatim}
Executing transitiveClosureExists(Set[Tuple[ℤ,ℤ]]) with all 512 inputs.
Execution completed for ALL inputs (675 ms, 512 checked, 0 inadmissible).
Executing transitiveClosureIsUnique(Set[Tuple[ℤ,ℤ]]) with all 512 inputs.
362 inputs (362 checked, 0 inadmissible, 0 ignored)... 
Execution completed for ALL inputs (2648 ms, 512 checked, 0 inadmissible).
\end{xverbatim}
Thus we can implicitly define a function which chooses a relation
that satisfies the required property:
\begin{xverbatim}
fun transitiveClosureI(r:relation):relation = 
  choose s:relation with isTransitiveClosure(r,s);
\end{xverbatim}
To validate our definitions, we can execute this function for all possible
inputs and inspect its results (because the result is uniquely defined,
deterministic execution suffices):
\begin{xverbatim}
Executing transitiveClosureI(Set[Tuple[ℤ,ℤ]]) with all 512 inputs.
Run 0 of deterministic function transitiveClosureI({}):
Result (3 ms): {}
...
Run 106 of deterministic function transitiveClosureI({[1,0],[0,1],[2,1],[0,2]}):
Result (1 ms): {[0,0],[1,0],[2,0],[0,1],[1,1],[2,1],[0,2],[1,2],[2,2]}
...
Execution completed for ALL inputs (4242 ms, 512 checked, 0 inadmissible).
Not all nondeterministic branches may have been considered.
\end{xverbatim}

\paragraph{Explicit (Recursive) Definition}

A natural approach to compute the transitive closure of a relation $r$
as a recursive function is based on the following fundamental steps:
\begin{enumerate}
\item If $r$ is transitive, then we are finished and $r$ is our result.
\item Otherwise $r$ contains some pairs $\langle x,z \rangle$ and
$\langle z,y \rangle$ that violate the transitivity of $r$ in the sense
that $\langle x,y \rangle$ is not in $r$. We thus add $\langle x,y \rangle$ to 
$r$ and continue with step (1).
\end{enumerate}
Indeed, in step (2) of the algorithm we may consider all violating pairs and add
to $r$ all the mending pairs at once. The corresponding function can be
defined in RISCAL as follows:
\begin{xverbatim}
fun transitiveClosureR(r:relation):relation
  ensures isTransitiveClosure(r,result);
  decreases 2^((N+1)^2)-|r|;
= if isTransitive(r) then 
    r
  else 
    transitiveClosureR(r ∪ 
      { ⟨x,y⟩ | x:elem,y:elem with ∃p∈r,q∈r. (x = p.1 ∧ y = q.2 ∧ p.2 = q.1) });
\end{xverbatim}
The termination of this function is guaranteed by the measure stated
in the \texttt{decrease} clause; its correctness follows from the fact that
the size $|r|$ of $r$ is increased in every recursive invocation; however, since
we have only $N+1$ elements, there are at most $(N+1)^2$ pairs in $r$, thus
$|r|$ can be at most $2^{(N+1)^2}$.

The partial correctness of this algorithm is a direct consequence of the
following theorem on which a later verification may be based:
\begin{xverbatim}
theorem transitiveClosureCorrectness(r:relation) ⇔
  if isTransitive(r) then
    isTransitiveClosure(r,r)
  else
    let s = r ∪ { ⟨x,y⟩ | x:elem,y:elem with ∃p∈r,q∈r. (x = p.1 ∧ y = q.2 ∧ p.2 = q.1) } in
    ∀t:relation. isTransitiveClosure(r ∪ s,t) ⇒ isTransitiveClosure(r,t);
\end{xverbatim}
Both the algorithm and the correctness theorem may be quickly validated for $N=2$:
\begin{xverbatim}
Executing transitiveClosureR(Set[Tuple[ℤ,ℤ]]) with all 512 inputs.
Execution completed for ALL inputs (699 ms, 512 checked, 0 inadmissible).
Executing transitiveClosureCorrectness(Set[Tuple[ℤ,ℤ]]) with all 512 inputs.
Execution completed for ALL inputs (1148 ms, 512 checked, 0 inadmissible).
\end{xverbatim}

\paragraph{Procedural Definition}

Another algorithm which may be easiest expressed as a procedure is based on the
following main steps:
\begin{enumerate}
\item We initialize the result variable $\mathit{res}$ with the empty set and
an auxiliary variable $\mathit{new}$ with $r$. 

\item We choose some pair $x\in\mathit{new}$ and check for every
pair $y\in\mathit{res}$, if the combination of $x$ and $y$ violates the
transitivity of $\mathit{res}$. If yes, add to $\mathit{new}$ the pair that mends
the violation.

\item We add $x$ to $\mathit{res}$, remove it from $\mathit{new}$ and
continue with step (2).

\item When $\mathit{new}$ becomes empty, the algorithm terminates
and we return $\mathit{res}$ as its result.
\end{enumerate}
In more detail, the algorithm can be formulated in RISCAL as follows:
\begin{xverbatim}
proc transitiveClosureP(r:relation):relation
  ensures isTransitiveClosure(r,result);
{
  var res:relation ≔ ∅[pair];
  var new:relation ≔ r;
  choose x ∈ new do
  {
    for y ∈ res do
    {
      if x.1 = y.2 ∧ ¬(⟨y.1,x.2⟩ ∈ res) then 
        new ≔ new ∪ { ⟨y.1, x.2⟩ };
      if x.2 = y.1 ∧ ¬(⟨x.1,y.2⟩ ∈ res) then
        new ≔ new ∪ { ⟨x.1,y.2⟩ };
    }
    res ≔ res ∪ { x };
    new ≔ new \ { x };
  }
  return res;
}
\end{xverbatim}
The termination of the algorithm can
be guaranteed by adding the termination measure
\begin{xverbatim}
decreases 2^((N+1)^2)-|res|;
\end{xverbatim}
to the outer loop. Similar to the recursive algorithm, its correctness follows 
from the fact that the size of $\mathit{res}$ is increases by every loop
iteration but cannot exceed $2^{(N+1)^2}$.

The partial correctness of the algorithm follows from the following
invariants of the outer loop:
\begin{xverbatim}
invariant res ∩ new = ∅[pair]; 
invariant res ∪ new ⊆ transitiveClosureI(r);  
invariant ∀s∈res,t∈res with s.2 = t.1. ⟨s.1,t.2⟩ ∈ res ∨ ⟨s.1,t.2⟩ ∈ new;
\end{xverbatim}
From the second invariant, we know that in every loop iteration $\mathit{res}$ 
is a subset of the transitive closure. Since the loop terminates when
$\mathit{new}$ is the empty set, the third invariant implies that then
$\mathit{res}$ is transitive and thus itself the transitive closure.

The correctness of the invariant of the outer proof has to be verified
with the help of the invariant of the inner loop:
\begin{xverbatim}
invariant new = old_new 
  ∪ { ⟨y0.1,x.2⟩ | y0 ∈ forSet with x.1 = y0.2 ∧ ¬⟨y0.1,x.2⟩ ∈ res }
  ∪ { ⟨x.1,y0.2⟩ | y0 ∈ forSet with x.2 = y0.1 ∧ ¬⟨x.1,y0.2⟩ ∈ res };
\end{xverbatim}
This invariant explicitly describes which values have been added after the
termination of the loop to the original value of $\mathit{new}$ (special
variable $\mathit{old\_new}$) to get the new value.

The correctness of the algorithm with respect to its specification and
the correctness of the annotations may be quickly validated for $N=2$:
\begin{xverbatim}
Executing transitiveClosureP(Set[Tuple[ℤ,ℤ]]) with all 512 inputs.
Execution completed for ALL inputs (1217 ms, 512 checked, 0 inadmissible).
Not all nondeterministic branches may have been considered.
\end{xverbatim}
Both loops have been expressed by non-deterministic iteration constructs
\texttt{choose} and \texttt{for} which may select the respective elements
in arbitrary order; here deterministic execution has been selected 
in order to avoid the combinatorial explosion of execution branches.

With the help of the RISCAL support for parallelism, it is also possible
to check larger domain instances. For instance, we may check the
instance $N=3$ with 4 threads on the local computer and 16 threads
on a remote server in a considerable but still manageable amount of time:
\begin{xverbatim}
Executing transitiveClosureP(Set[Tuple[ℤ,ℤ]]) with all 65536 inputs.
...
PARALLEL execution with 4 local threads and 1 remote servers (output disabled).
...
Execution completed for ALL inputs (870524 ms, 65536 checked, 0 inadmissible).
Not all nondeterministic branches may have been considered.
\end{xverbatim}
However, as emphasized before, the point of RISCAL is not so much in verifying
specifications but finding errors; typically such errors already arise
in small domain instances that allow quicker checking.


\section{Conclusions and Further Work}
\label{sect:conclusions}

RISCAL is still in its infancy with first contents
having been developed and first classroom experience having been gained;
subsequent experience with the use of RISCAL will shape the further
development of language and system. This will proceed along several
possible strands:
\begin{itemize}
\item We will automatize the generation of formulas for the validation
of specifications and of verification conditions 
for the proof-based verification of algorithms over arbitrary size domains.
\item As a (potentially more efficient) alternative to formula evaluation, we 
will translate formulas into a suitable theory of the
SMT-LIB library and apply SMT solvers for checking their validity.
\item We will investigate the visualization of formula evaluation in order
to give students quick feedback why a formula is not valid.
\item We will export formulas to some external prover(s) such as the 
RISC ProofNavigator to allow the seamless proof-based verification of formulas
over arbitrary size domains.
\end{itemize}
Most important, however, we want to develop a library of specifications and
accommodating lecture materials that shall support the self-study of students
in the formalization of mathematical theories and algorithms with the ultimate
goal of self-paced and self-instructed learning.

\bibliographystyle{eptcs}
\bibliography{main}

\end{document}